\documentclass[11pt,a4paper]{article}
\usepackage[latin1]{inputenc}
\usepackage{amsmath}
\usepackage{amsfonts}
\usepackage{amssymb}
\usepackage{amsthm}
\usepackage{graphicx}
\usepackage{enumitem}
\author{Tuomas Hakoniemi\\
	Universitat Polit\`{e}cnica de Catalunya}
\title{Monomial-size vs. Bit-complexity in Sums-of-Squares and Polynomial Calculus}

\newcommand{\Q}{\mathbb{Q}}
\newcommand{\R}{\mathbb{R}}
\newcommand{\KNAPSACK}{\mathrm{KNAPSACK}}
\newcommand{\ks}{\mathrm{ks}}
\newcommand{\poly}{\mathrm{poly}}

\newtheorem{theorem}{Theorem}
\newtheorem{lemma}{Lemma}
\newtheorem{corollary}{Corollary}

\theoremstyle{definition}

\newtheorem{definition}{Definition}
\begin{document}
	\maketitle
	
	\begin{abstract}
		In this paper we consider the relationship between monomial-size and bit-complexity in Sums-of-Squares (SOS) in Polynomial Calculus Resolution over rationals (PCR/$\Q$). We show that there is a set of polynomial constraints $Q_n$  over Boolean variables that has both SOS and PCR/$\Q$   refutations of degree 2 and thus with only polynomially many monomials, but for which any SOS or PCR/$\Q$   refutation must have exponential bit-complexity, when the rational coefficients are represented with their reduced fractions written in binary.
	\end{abstract}
	
	\section{Introduction}
	
	Polynomial Calculus (PC) is an algebraic proof system, introduced in \cite{CleggEI96}, that draws inspiration from Gr\"{o}bner basis computations in computational algebraic geometry. Ultimately it is based on Hilbert's Nullstellensatz, and Polynomial Calculus can be seen as a dynamic version of Nullstellensatz that is based on schematic inference rules. In the Boolean realm the original definition of \cite{CleggEI96} was strengthened to a system called Polynomial Calculus Resolution (PCR) in \cite{AlekhnovichBRW02} by the introduction of the so called twin variables, that allow for more compact representations of Boolean functions.
	
	Sums-of-Squares (SOS) on the other hand is based on Putinar's Positivstellensatz \cite{Putinar} in real algebraic geometry. As a refutation system it was originally considered by Grigoriev and Vorobjov in \cite{GrigorievV01}. On the other side of things SOS has strong connections to approximation algorithms through the hierarchies of semidefinite programming relaxations in combinatorial optimization \cite{Lasserre,Parrilo,BarakBHKSZ12}. We refer the reader to the survey \cite{Laurent2009} for further discussion.
	
	The most studied complexity measure for these systems has been the degree of a refutation. We on the other hand concentrate here on other possible complexity measures for the systems that are of interest especially to proof complexity. The measures studied here are monomial-size, the number of monomials in a refutation, and bit-complexity, the number of bits actually needed to write down the refutation.
	
	We encounter here a somewhat surprising situation where there is an exponential separation between monomial-size and bit-complexity. We obtain strong exponential bit-complexity lower bounds for a set of constraints that appears easy for the proof systems by the standards of the other complexity measures, degree and monomial-size. To the best of our knowledge this is the first lower bound of this sort for the systems in question.
	
	The question about bit-complexity of Sums-of-Squares proofs was raised by O'Donnell in \cite{ODonnell2017} in connection with (degree)-automatability of Sums-of-Squares (SOS) proofs. O'Donnell noted that the received wisdom, that a degree $d$   Sums-of-Squares proof can be found using the ellipsoid algorithm in time $n^{O(d)}$   if one exists, is not entirely true. Difficulties may arise if the only proofs of degree $d$   contain exceedingly large coefficients as then the initial ellipsoid cannot be chosen small enough to guarantee polynomial runtime.
	
	Building on O'Donnell's example, Raghavendra and Weitz exhibited in \cite{RaghavendraWeitz2017} (see also \cite{Weitz:EECS-2017-38}) an example of a set of polynomial constraints over $O(n^2)$   Boolean variables and a polynomial that has SOS proofs of degree $2$   from the constraints, but for which any SOS proof of degree $O(n)$   from the constraints must contain a coefficient of magnitude doubly exponential in $n$.
	
	The example of Raghavendra and Weitz leaves open, however, the possibility that there are SOS proofs of degree higher than $O(n)$, but with only polynomially many monomials, and with coefficients of polynomial bit-complexity. In other words the example leaves open the possibility that there are SOS proofs from the constraints that can be written down using only polynomially many bits.
	
	Similarly for Polynomial Calculus it has been often stated that a degree $d$   Polynomial Calculus proof can be found in time $n^{O(d)}$ if they exist. This is certainly true for Polynomial Calculus over finite fields as is already clear from the proof search algorithm given in \cite{CleggEI96}. However for Polynomial Calculus over reals or rationals there is potential for significant coefficient bloat in the proof search.
	
	We give here an example of an unsatisfiable set of polynomial constraints that has both SOS and PCR/$\Q$   refutations of degree $2$, and thus with only polynomially many monomials, but for which any SOS or PCR/$\Q$   refutation must be of exponential bit-complexity. For SOS we prove this by showing that any refutation with less than exponentially many monomials must contain a coefficient of doubly exponential magnitude. For PCR/$\Q$   we show that any proof with bounded coefficients using less than exponentially many monomials must be exceedingly tall. 
	
	Raghavendra and Weitz proved the lower bound on the magnitude of coefficients in proofs of degree $O(n)$   using the linear degree lower bounds for refutations of Knapsack obtained in \cite{Grigoriev2001a}, and the linear degree pseudoexpectations for Knapsack it provides. We on the other hand will use lower bounds on the number of monomials in refutations of Knapsack, and a suitably generalized notion of pseudoexpectation that can be used to argue for bounds on number of monomials rather than degree.
	%
	
	\section{Preliminaries}
	
	We denote by $[n]$ the set of positive integers up to and including $n$, i.e. $[n] := \{1,\ldots,n\}$, and by $\R_+$ the set of non-negative real numbers. Given a polynomial $p$ in real or rational coefficients, we denote by $\Vert p \Vert$ the maximum coefficient of $p$ in absolute value.
	
	\subsection{Polynomials and the Boolean ideal}
	
	We consider real polynomials over $n$   pairs of Boolean variables $x_i,\bar{x}_i$, $i\in [n]$. The intended meaning here is that all variables take value $0$ or $1$ and the two variables in a pair assume the opposite values. Accordingly, we denote by $I_n$ \textbf{the Boolean ideal} over $n$   Boolean variables, i.e.
	$I_n := \langle x_i^2 - x_i,x_i + \bar{x}_i - 1: i\in [n]\rangle$. We write $p\equiv q \mod I_n$, when $p - q\in I_n$.
	
	Given a set $S$   of monomials in the variables $x_i,\bar{x}_i$   we denote by $\R[S]$   the set of all linear combinations of elements of $S$   with real coefficients, i.e. polynomials that use only monomials from the set $S$. We denote by $S^2$ the set of products $m_1m_2$, where $m_1,m_2\in S$.
	
	\subsection{Sums-of-Squares proofs}
	
	Let $Q = \{q_1 = 0,\ldots,q_m = 0 \}$   be a set of polynomial equality constraints. Let $p$ be another polynomial. A \textbf{Sums-of-Squares (SOS) proof} of non-negativity of $p$   from $Q$   is a polynomial equality of the form
	\begin{equation}\label{proof}
	p = \sum_{i\in [k]}r_i^2 + \sum_{q\in Q} t_q q 
	+ \sum_{i\in [n]}\left(u_i(x_i^2 - x_i) + v_i(x_i + \bar{x}_i - 1)\right),
	\end{equation}	
	where $r_i$, $t_q$, $u_i$   and $v_i$   are arbitrary polynomials. We call the polynomials $t_q$ the \textbf{lifts of the non-logical axioms}, and the polynomials $u_i$ and $v_i$ the \textbf{lifts of the logical axioms}.
	An \textbf{SOS refutation} of $Q$   is a proof of non-negativity of $-1$   from $Q$.
	
	
	The proof \eqref{proof} has \textbf{degree} at most $d$   if $\deg(r_i^2)\leq d$   for any $i\in [k]$, $\deg(t_q q)\leq d$   for any $q\in Q$   and 
	and $\deg(u_i(x_i^2 - x_i))\leq d$   and $\deg(v_i(x_i + \bar{x}_i - 1))\leq d$   for any $i\in [n]$. 
	
	The \textbf{explicit monomials} of the proof \eqref{proof} are all the monomials appearing in the polynomials $r_i$   for any $i\in [k]$, in the polynomials $t_q$   and $q$   for any $q\in Q$   and in the polynomials $u_i$, $v_i$, $x_i^2 - x_i$   and $x_i + \bar{x}_i - 1$ for any $i\in [n]$. In other words, the explicit monomials of the proof \eqref{proof} are all the monomials visible in the explicit representation of the proof. 
	
	We distinguish a particular subset of all the explicit monomials as particularly important, and call the monomials in the polynomials $r_i$ and $t_q$ the \textbf{significant monomials} of the proof \eqref{proof}.
	
	The \textbf{monomial-size} of the proof \eqref{proof} is the number of explicit monomials counted with multiplicity. In the case that the polynomials in the proof \eqref{proof} have only rational coefficients, we define the \textbf{bit-complexity} of \eqref{proof} as the minimum length of a bit-string representing the proof when the rational coefficients are represented with their reduced fractions written in binary.
	
	
	Given a set $S$   of monomials that includes the empty monomial $1$, and all the monomials appearing in $Q$, we say that the proof \eqref{proof} is a \textbf{proof over $S$}, when all the significant monomials of \eqref{proof} are among $S$.
	
	
	We write $Q\vdash_S p\geq q$   if there is a proof of non-negativity of $p-q$   from $Q$   over $S$.
	
	\subsection{Polynomial Calculus Resolution}
	
	Let again $Q = \{q_1 = 0,\ldots,q_m = 0\}$   be a set of polynomial equality constraints, and let $p$ be another polynomial. A \textbf{Polynomial Calculus Resolution proof} over rationals (PCR/$\Q$) of $p = 0$   from $Q$   is a sequence 
	\begin{equation}\label{PCR-proof}
	p_1,\ldots,p_\ell
	\end{equation}
	of polynomials such that $p_\ell = p$   and for any $i\in [\ell]$   one of the following holds:
	\begin{itemize}
		\item $p_i\in Q$, i.e. $p_i$   is one of the non-logical axioms;
		\item $p_i\in B_n$, i.e. $p_i$ is one of the logical axioms;
		\item there is $j < i$ and a variable $x$   such that $p_i = xp_j$, i.e. $p_i$   is obtained from $p_j$   via lifting with a variable $x$;
		\item there are $j,k < i$   and $a,b\in\Q$   such that $p_i = ap_j + bp_k$, i.e. $p_i$   is obtained from $p_j$   and $p_k$   via linear combination.
	\end{itemize}\
	A \textbf{PCR/$\Q$ refutation }of $Q$ is a proof of $1 = 0$ from $Q$.
	
	The proof \eqref{PCR-proof} is of \textbf{degree} at most $d$   if $\deg(p_i)\leq d$ for every $i\in [\ell]$, and of \textbf{monomial-size} at most $s$ if the number of monomials counted with multiplicity in all the polynomials $p_i$ is at most $s$. The \textbf{height} of the proof \eqref{PCR-proof} is the length of the longest path from a leaf to $p_\ell$, when the proof \eqref{PCR-proof} is considered as a directed acyclic graph. The height of a polynomial $p_i$   in the proof \eqref{PCR-proof} is similarly the length of the longest path from a leaf to $p_i$. Hence all elements of $Q$ and $B_n$ are at height $0$.
	
	We define the \textbf{bit-complexity} of the proof \eqref{PCR-proof} to be the minimum length of a bit-string representing all the polynomials in the proof and all the scalars used in the linear combination rule, when all the coefficients and scalars are represented with their reduced fractions written in binary.
	
	\subsection{Size-degree trade-offs for Sums-of-Squares}
	
	Finally we recall the size-degree trade-offs for SOS obtained in \cite{AtseriasHakoniemi2019}, and discuss a minor improvement that can be made for our purposes.
	
	We showed in \cite{AtseriasHakoniemi2019} how to transform an SOS refutation with few significant monomials\footnote{In the paper \cite{AtseriasHakoniemi2019} we called the significant monomials of this paper by the name \emph{explicit monomials}.} into an SOS refutation of relatively low degree. In detail we showed that if a set $Q$   of polynomial constraints in $n$   pairs of twin variables has an SOS refutation with $s$   many significant monomials counted with multiplicity, then it has an SOS refutation of degree at most $4\sqrt{2(n+1)\log(s)}+k+4$, where $k$   is the maximum degree of the constraints in $Q$. This implies a degree-criterion for monomial-size lower bounds in SOS.
	
	The size-degree trade-off can however be slightly improved as we don't actually need to count the number of significant monomials with multiplicity for the proof to work. Thus we have the following.
	
	\begin{theorem}\label{size-degree}
		For every set $Q$   of equality constraints of degree at most $k$   in $n$   pairs of twin variables, if there is an SOS refutation of $Q$   with $s$   many \textbf{distinct} significant monomials, then there is an SOS refutation of $Q$   of degree at most $4\sqrt{2(n+1)\log(s)}+k+4$.
	\end{theorem}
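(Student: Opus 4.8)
The plan is to follow the proof of the size--degree trade-off from \cite{AtseriasHakoniemi2019} essentially verbatim, observing that at the single place where the number of significant monomials enters, it suffices to count \emph{distinct} monomials. First I would normalise the given refutation \eqref{proof}. Diagonalising the Gram matrix of $\sum_i r_i^2$ lets us assume there are at most $s$ squares, each $r_i$ (and each lift $t_q$) a linear combination of the $s$ distinct significant monomials; and since $p^2 - (\mathrm{ml}(p))^2 = (p-\mathrm{ml}(p))(p+\mathrm{ml}(p)) \in I_n$ for the multilinearisation $\mathrm{ml}(p)$, we may replace every $r_i$ and $t_q$ by its multilinearisation, absorbing the difference into the lifts of the logical axioms. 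Thus we may assume each significant monomial is multilinear and mentions at most one of $x_i,\bar{x}_i$ for each $i$ (as $x_i\bar{x}_i \equiv 0 \bmod I_n$), so that a significant monomial of degree $>d$ occupies more than $d$ of the $n$ index slots. The argument then rests on two facts. (i) \emph{Restriction}: substituting $x_j\mapsto c$, $\bar{x}_j\mapsto 1-c$ with $c\in\{0,1\}$ into \eqref{proof} turns an SOS refutation of $Q$ whose significant monomials lie in a set $S$ into one whose significant monomials lie in $S|_\rho$, and $|S|_\rho|\le |S|$. (ii) \emph{Reassembly}: given SOS refutations of $Q|_{x_j=1}$ and of $Q|_{x_j=0}$, multiplying the first through by $x_j$ and the second by $\bar{x}_j$ and using $x_j\equiv x_j^2$, $\bar{x}_j\equiv\bar{x}_j^2 \bmod I_n$ (so that $x_jr_i$ and $\bar{x}_jr_i$ are again legitimate square roots) together with $x_j\,q|_{x_j=1}\equiv x_jq$ and $\bar{x}_j\,q|_{x_j=0}\equiv\bar{x}_jq \bmod I_n$, yields SOS derivations of $-x_j\geq 0$ and $-\bar{x}_j\geq 0$ from $Q$; adding these and using $x_j+\bar{x}_j\equiv 1$ gives an SOS refutation of $Q$ itself whose significant monomials have degree at most one more than those of the two sub-refutations, and whose logical-axiom lifts can be taken of comparably small degree as usual.

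Now fix a threshold $d$. If every significant monomial of the (normalised) refutation already has degree $\le d$, we are done up to the additive $k+4$ accounting for the lifts $t_qq$ and some slack. Otherwise, among the at most $s$ \textbf{distinct} significant monomials of degree $>d$, each occupies more than $d$ of the $n$ slots, so, counting incidences of these monomials with the $2n$ literals and averaging, some literal $x_j$ or $\bar{x}_j$ divides more than a $\tfrac{d}{2n}$ fraction of them. Restricting that literal to $0$ annihilates that fraction of the significant monomials of degree $>d$; recursing on that branch, and via (ii) also on the complementary branch, one restricts at most $O\!\big(\tfrac{n}{d}\log s\big)$ literals before no significant monomial of degree $>d$ survives, each use of (ii) costing $O(1)$ in degree. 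Choosing $d$ of order $\sqrt{2(n+1)\log s}$ balances the threshold against the recursion depth and yields total degree at most $4\sqrt{2(n+1)\log s}+k+4$; carrying the constants through exactly as in \cite{AtseriasHakoniemi2019} reproduces this bound. The key point for us is that the averaging step is applied to the \emph{set} of significant monomials of degree $>d$, not to a multiset counted with multiplicities: both the estimate ``some literal divides a $\tfrac{d}{2n}$ fraction'' and the fact that this set can only shrink under restriction are insensitive to multiplicities, so $s$ may be taken throughout as the number of distinct significant monomials.

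The technical heart of the argument is the reassembly step (ii) and the bookkeeping in the recursion that guarantees the object finally produced is genuinely an SOS refutation of $Q$ of the claimed degree (rather than of some restriction of $Q$), handling the twin variables and the logical-axiom lifts; this is exactly the content of \cite{AtseriasHakoniemi2019} and I would not redo it. The only thing that needs checking for the present, slightly stronger, statement is that this machinery never requires the significant monomials to be counted with multiplicity, which is immediate from the discussion above; I expect this to be the whole of the new work, and there is no genuinely new obstacle.
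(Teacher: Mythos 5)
Your proposal takes essentially the same route as the paper, which itself defers to \cite{AtseriasHakoniemi2019} with the remark that ``one simply needs to add the word distinct to few places in the proof.'' Your identification of \emph{why} the change is free --- namely that both the averaging step (some literal divides a $\tfrac{d}{2n}$ fraction of the distinct high-degree significant monomials) and the shrinkage of that set under restriction are statements about sets, not multisets --- is precisely the observation the paper is implicitly invoking.
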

	
	The proof of this theorem is truly almost identical to the proof of the main theorem of \cite{AtseriasHakoniemi2019}: one simply needs to add the word distinct to few places in the proof. We omit the details here.
	
	An important consequence of Theorem \ref{size-degree} is the following degree-criterion for the number of significant monomials in a refutation.
	
	\begin{corollary}\label{degree-criterion}
		For every set $Q$   of equality constraints of degree at most $k$ in $n$ pairs of twin variables, if $d_Q$   is the minimum degree of an SOS refutation of $Q$, $s_Q$   is the minimum number of distinct significant monomials needed to refute $Q$   in SOS, and $d_Q\geq k + 4$, then
		\[
		s_Q \geq \exp\left(\left(d_Q - k - 4\right)^2/\left(32\left(n + 1\right)\right)\right).
		\]
	\end{corollary}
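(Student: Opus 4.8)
The plan is to derive Corollary \ref{degree-criterion} directly from Theorem \ref{size-degree} by taking the contrapositive and rearranging. The only content beyond Theorem \ref{size-degree} is elementary algebra, so the proof will be short; the one point requiring a moment of care is the use of the hypothesis $d_Q \geq k+4$ to license squaring an inequality.

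First I would fix an SOS refutation of $Q$ that uses exactly $s_Q$ distinct significant monomials, which exists by the definition of $s_Q$. Applying Theorem \ref{size-degree} to this refutation yields an SOS refutation of $Q$ of degree at most $4\sqrt{2(n+1)\log(s_Q)}+k+4$. Since $d_Q$ is by definition the minimum degree over \emph{all} SOS refutations of $Q$, this particular low-degree refutation witnesses
\[
d_Q \;\leq\; 4\sqrt{2(n+1)\log(s_Q)}+k+4,
\]
and hence $d_Q - k - 4 \leq 4\sqrt{2(n+1)\log(s_Q)}$.

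Next I would invoke the hypothesis $d_Q \geq k + 4$, which guarantees that the left-hand side $d_Q - k - 4$ is non-negative; both sides of the displayed inequality are then non-negative, so squaring preserves it and gives
\[
(d_Q - k - 4)^2 \;\leq\; 16 \cdot 2(n+1)\log(s_Q) \;=\; 32(n+1)\log(s_Q).
\]
Dividing by $32(n+1)$ and exponentiating yields exactly
\[
s_Q \;\geq\; \exp\!\left(\left(d_Q - k - 4\right)^2 / \left(32(n+1)\right)\right),
\]
as claimed.

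There is essentially no substantive obstacle here — the real work is all packed into Theorem \ref{size-degree}, which I am taking as given. The only things to be vigilant about are bookkeeping: using the minimality of $d_Q$ in the correct direction (the refutation produced by Theorem \ref{size-degree} need not be one of minimum monomial count, and the $s_Q$-monomial refutation need not be one of minimum degree, but neither fact matters for the chain of inequalities), the non-negativity check that justifies squaring, and consistency in the base of the logarithm (reading $\log$ as the natural logarithm so that it pairs with $\exp$).
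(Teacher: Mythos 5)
Your proposal is correct and is exactly the (implicit) derivation the paper has in mind: apply Theorem \ref{size-degree} to a minimum-monomial refutation, use minimality of $d_Q$ to get $d_Q \leq 4\sqrt{2(n+1)\log s_Q}+k+4$, then square using $d_Q \geq k+4$ and exponentiate. The paper states the corollary without proof, treating it as an immediate consequence, and your bookkeeping (direction of each minimality claim, non-negativity before squaring) is exactly the care required.
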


	\section{$S$-pseudoexpectations}
	
	In this section we give a characterization for SOS refutations over a set of monomials using the so called $S$-pseudoexpectations. We will prove that $S$-pseudoexpectations give, in a sense, sound and complete semantics for SOS refutations over $S$.
	
	Pseudoexpectations have traditionally been used to argue against low-degree refutations and to prove degree lower bounds in SOS. They are linear functionals that fool degree bounded SOS to think that the set of constraints is satisfiable by mapping anything provably non-negative in bounded degree to a non-negative value. Probably the first instance of this idea appears in \cite{Grigoriev2001b}, however the term `pseudoexpectation' appears for the first time in \cite{BarakBHKSZ12}. Actually, the existence of degree bounded pseudoexpectations is equivalent to the non-existence of degree bounded refutations \cite{BarakSteurer2014}. Thus from a logical point of view pseudoexpectations can be considered as sound and complete semantics for degree bounded SOS refutations.
	
	$S$-pseudoexpectations generalize the notion of degree bounded pseudoexpectations to work against refutations over fixed sets of monomials. The idea here is simple: rather than defining the functionals on a vector space of polynomials up to some degree, we define the functionals on a vector space of polynomials over a given set of monomials. This allows us to use pseudoexpectations to argue for upper and lower bounds on monomial size of SOS refutations. Pseudoexpectations against refutations over a set of monomials have previously appeared in \cite{Hakoniemi2020}, however in a slightly different form. 
	
	
	\begin{definition}
		A linear functional $E\colon \R[S^2]\rightarrow\R$   is an $S$-pseudoexpectation for $Q$   when
		\begin{itemize}
			\item $E(1) = 1$;
			\item $E(p)\geq 0$   if $Q\vdash_S p\geq 0$   for $p\in\R[S^2]$.
		\end{itemize}	
	\end{definition}
	
	For the soundness and completeness theorem we need the following lemma. 
	%
	
	\begin{lemma}\label{key-lemma}
		For any $p\in\R[S^2]$   there is $r\in \R_+$   such that 
		\[
		\emptyset\vdash_S r \geq p.
		\]
	\end{lemma}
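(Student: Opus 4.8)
The plan is to take $r := \sum_j \lvert c_j\rvert$, where $p = \sum_j c_j m_j$ is the expansion of $p$ into distinct monomials $m_j\in S^2$, and to exhibit one explicit SOS identity for $r-p$ that is a proof over $S$. It is built by chaining two elementary observations, and the only thing one has to watch is that every square used has all of its monomials in $S$ rather than merely in $S^2$.

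First, write each $m_j = a_j b_j$ with $a_j,b_j\in S$, which is possible since $m_j\in S^2$. Set $d_a := \sum_{j:\,a_j=a}\tfrac{\lvert c_j\rvert}{2} + \sum_{j:\,b_j=a}\tfrac{\lvert c_j\rvert}{2}\ge 0$, so that $\sum_{a\in S} d_a a^2 = \sum_j \tfrac{\lvert c_j\rvert}{2}(a_j^2+b_j^2)$ and $\sum_{a\in S} d_a = r$. The identities $\tfrac12(a^2+b^2)-ab=\tfrac12(a-b)^2$ and $\tfrac12(a^2+b^2)+ab=\tfrac12(a+b)^2$, applied with the first sign for the indices $j$ with $c_j>0$ and the second for those with $c_j<0$, give
\[
\sum_{a\in S} d_a\, a^2 - p \;=\; \sum_{j:\,c_j>0}\tfrac{c_j}{2}(a_j-b_j)^2 \;+\; \sum_{j:\,c_j<0}\tfrac{\lvert c_j\rvert}{2}(a_j+b_j)^2 .
\]
Every monomial occurring in $a_j\pm b_j$ lies in $S$, so the right-hand side is a sum of squares of polynomials from $\R[S]$. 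This is precisely the step where one cannot afford to expand $ab$ directly as $\tfrac14((a+b)^2-(a-b)^2)$: that would leave the monomial $ab$, which need not belong to $S$.

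Second, bound each $a^2$ with $a\in S$ above by $1$ over $S$ using $1-a^2 = (1-a)^2 - 2(a^2-a)$ together with the fact that $a^2-a\in I_n$ for every monomial $a$ in the twin variables. The latter follows by induction on the number of variables in $a$: writing $a = y\,a'$ with $y$ a single twin variable, $a^2-a = y^2((a')^2-a') + (y^2-y)a'$, and $y^2-y\in I_n$ both when $y=x_i$ (by definition of $I_n$) and when $y=\bar{x}_i$ (a one-line check expressing $\bar{x}_i^2-\bar{x}_i$ as a combination of $x_i^2-x_i$ and $x_i+\bar{x}_i-1$). Hence $1-a^2$ equals the square $(1-a)^2$, whose only monomials $1$ and $a$ lie in $S$, plus an element of $I_n$.

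Finally, assemble the two steps:
\[
\begin{aligned}
r - p &= \sum_{a\in S} d_a(1-a^2) + \Bigl(\sum_{a\in S} d_a\, a^2 - p\Bigr)\\
&= \sum_{a\in S} d_a(1-a)^2 + \sum_{j:\,c_j>0}\tfrac{c_j}{2}(a_j-b_j)^2 + \sum_{j:\,c_j<0}\tfrac{\lvert c_j\rvert}{2}(a_j+b_j)^2 + g ,
\end{aligned}
\]
with $g := -2\sum_{a\in S} d_a(a^2-a)\in I_n$. Since $Q=\emptyset$ there are no non-logical axioms, and writing $g$ through the generators of $I_n$ as $\sum_i(u_i(x_i^2-x_i)+v_i(x_i+\bar{x}_i-1))$ puts the right-hand side into the form \eqref{proof}; all of its significant monomials lie in $S$. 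Thus $\emptyset\vdash_S r\ge p$ with $r\in\R_+$. The main obstacle throughout is the bookkeeping constraint ``monomials in $S$, not $S^2$'', which is exactly what forces the two-step detour; the ideal-membership claim $a^2-a\in I_n$ is routine.
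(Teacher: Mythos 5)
Your proof is correct and uses essentially the same ingredients as the paper's: you factor each $m\in S^2$ as a product of two monomials in $S$, absorb the cross terms into squares of $a_j\pm b_j$ according to the sign of $c_j$, and reduce the remaining $a^2$ terms to constants via $(1-a)^2$ together with $a^2-a\in I_n$. The paper does this monomial-by-monomial (first $am$ with $m\in S$, then $am_1m_2$), whereas you assemble the whole thing into one explicit SOS identity; this is a cosmetic rather than substantive difference.
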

	
	\begin{proof}
		Let first $m\in S$, and let $a\in\R$. We want to show that there is some $b\in \R_+$   such that $\emptyset\vdash_S b\geq am$. If $a < 0$, then $-am \equiv {-a}m^2 \mod I_n$   and so $\emptyset\vdash_S 0 \geq am$. On the other hand if $a > 0$, then $a - am \equiv a(1 - m)^2 \mod I_n$, and so $\emptyset\vdash_S a\geq am$.
		
		Let then $m_1,m_2\in S$   and $a\in\R$. We want to show that there is some $b\in\R_+$   such that $\emptyset\vdash_S b\geq am_1m_2$. If $a < 0$, then $-am_1 - 2am_1m_2 - am_2 \equiv -a(m_1 + m_2)^2 \mod I_n$. On the other hand, by the above paragraph, there are $b_1,b_2\in\R_+$   such that $\emptyset\vdash_S b_i\geq -am_i$   for $i = 1,2$. Hence $\emptyset\vdash_S (b_1 + b_2)/2\geq am_1m_2$. If $a > 0$, then $am_1 - 2am_1m_2 + am_2 \equiv a(m_1 - m_2)^2 \mod I_n$. Again there are $b_1,b_2\in\R_+$   such that $\emptyset\vdash_S b_i\geq am_i$   for $i = 1,2$, and so $\emptyset\vdash_S (b_1 + b_2)/2\geq am_1m_2$.
	\end{proof}\
	
	Using the above lemma, we can prove the soundness and completeness theorem easily. Below, and later in Lemma \ref{simulation}, we need the hyperplane separation theorem in the following form: in a finite-dimensional real vector space $V$, given a convex cone $C$   and a convex set $D$   disjoint from $C$, there is a non-trivial, i.e. one that takes also non-zero values, linear functional $L\colon V\rightarrow\R$   such that $L(c)\geq 0$   for any $c\in C$   and $L(d)\leq 0$ for every $d\in D$. That is, there is always a hyperplane separating a convex cone and a convex set that passes through the origin. 
	
	\begin{theorem}[Soundness and Completeness]\label{soundnesscompleteness} 
		Let $S$   be a finite set of monomials. Then there is no SOS refutation of $Q$   over $S$   if and only if there is an $S$-pseudoexpectation for $Q$.
	\end{theorem}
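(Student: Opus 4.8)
The plan is to prove the two implications separately. The soundness direction---that an $S$-pseudoexpectation rules out refutations over $S$---is immediate, whereas the completeness direction---that the absence of a refutation over $S$ yields an $S$-pseudoexpectation---carries all the weight, and it will be handled by a hyperplane separation argument in the finite-dimensional vector space $\R[S^2]$.

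For soundness, suppose $E$ is an $S$-pseudoexpectation for $Q$ and, towards a contradiction, that there is an SOS refutation of $Q$ over $S$. By definition such a refutation witnesses $Q\vdash_S -1\geq 0$, and since $1\in S$ we have $-1\in\R[S^2]$, so the defining property of $E$ gives $E(-1)\geq 0$; but linearity together with $E(1)=1$ gives $E(-1)=-1$, a contradiction.

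For completeness, assume there is no SOS refutation of $Q$ over $S$. First I would put $V:=\R[S^2]$, which is finite-dimensional because $S$ is finite, and set
\[
C:=\{\,p\in V : Q\vdash_S p\geq 0\,\}.
\]
A short check shows $C$ is a convex cone: adding two SOS proofs over $S$ yields an SOS proof over $S$, multiplying such a proof by a non-negative scalar (absorbing a square root into each $r_i$) again yields a proof over $S$, and $0\in C$ trivially. The hypothesis that $Q$ has no refutation over $S$ says exactly that $-1\notin C$, so $C$ and the convex set $D:=\{-1\}$ are disjoint; applying the hyperplane separation theorem recalled above produces a non-trivial linear functional $L\colon V\to\R$ with $L(c)\geq 0$ for all $c\in C$ and $L(-1)\leq 0$, i.e. $L(1)\geq 0$.

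The main obstacle is to strengthen $L(1)\geq 0$ to $L(1)>0$, so that $E:=L/L(1)$ is well defined, and this is precisely where Lemma~\ref{key-lemma} is used. Suppose $L(1)=0$. Given any $q\in\R[S^2]$, the lemma supplies $r,r'\in\R_+$ with $\emptyset\vdash_S r\geq q$ and $\emptyset\vdash_S r'\geq -q$; since a proof from $\emptyset$ over $S$ is in particular a proof from $Q$ over $S$, both $r-q$ and $r'+q$ lie in $C$. Then $0\leq L(r-q)=rL(1)-L(q)=-L(q)$ and $0\leq L(r'+q)=r'L(1)+L(q)=L(q)$, forcing $L(q)=0$; as $q$ was arbitrary this contradicts the non-triviality of $L$, so $L(1)>0$. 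Finally $E:=L/L(1)$ is linear, satisfies $E(1)=1$, and has $E(p)=L(p)/L(1)\geq 0$ whenever $Q\vdash_S p\geq 0$; hence $E$ is an $S$-pseudoexpectation for $Q$, which completes the argument.
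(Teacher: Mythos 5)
Your proof is correct and follows essentially the same route as the paper: soundness is the immediate direction, and completeness comes from applying the hyperplane separation theorem to the cone of provably-non-negative polynomials in $\R[S^2]$, then using Lemma~\ref{key-lemma} to rule out $L(1)=0$ and rescaling. The only difference is cosmetic --- you invoke the lemma separately for $q$ and $-q$ rather than packaging both bounds into a single $r$, which is arguably a touch cleaner than the paper's phrasing.
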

	
	\begin{proof}
		It is clear that if there is a refutation of $Q$   over $S$, then there cannot exist an $S$-pseudoexpectation for $Q$, since any $S$-pseudoexpectation maps any proof over $S$   to a non-negative value, and $-1$  to $-1$.
		
		Suppose then that there is no refutation of $Q$  over $S$. Then $-1$ is not in the convex cone of all polynomials in $\R[S^2]$  provably non-negative from $Q$  over $S$, and so, by the hyperplane separation theorem, there is a non-trivial linear functional $L\colon\R[S^2]\rightarrow\R$  such that $L(p)\geq 0$  if $Q\vdash_S p\geq 0$  and $L(-1)\leq 0$. 
		
		We argue first that $L(1) > 0$. So suppose towards a contradiction that $L(1) = 0$. Now, by Lemma \ref{key-lemma} for any $p\in\R[S^2]$  there is non-negative $r$  such that $Q\vdash_S r\geq p\geq -r$. Hence $L(r)\geq L(p)\geq L(-r)$. But now, by assumption, $L(r) = L(-r) = 0$, and so $L(p) = 0$, against the non-triviality of $L$.
		
		Now by defining $E(p) = L(p)/L(1)$, we have an $S$-pseudoexpectation for $Q$.
	\end{proof}\
	
	We remark that we can actually remove the assumption about the finiteness of $S$  in Theorem \ref{soundnesscompleteness}. However we only need the theorem in the form stated and thus omit the details of the more general statement.
	
	%
	%
	
	\section{The constraints}\label{section:knapsack}
	
	In this section we introduce the set of constraints that we use to prove our claims about lower bounds on bit-complexity.
	
	Recall first the knapsack constraint  $\KNAPSACK(n,k)$: 
	\[x_1 + \ldots + x_n = k.\] 
	If $k$  is not an integer, the constraint is unsatisfiable over the Boolean values. However, for any $\varepsilon$  strictly between $0$  and $1$, $\KNAPSACK(2n,n + \varepsilon)$  requires degree at least $2n$  to refute in SOS \cite{Grigoriev2001a}. 
	
	Now, by Corollary \ref{degree-criterion}, to refute $\KNAPSACK(2n,n + \varepsilon)$  in SOS one needs at least $s_n$  many distinct significant monomials, where
	\[s_n =  \exp\left(\frac{(2n - 5)^2}{32(2n + 1)}\right).\]
	A lower bound of the same order was also obtained earlier in \cite{GrigorievHP02} using more ad hoc methods.
	
	
	It follows that for any $\varepsilon$  strictly between $0$  and $1$  and for any set $S$  of monomials containing all the variables and the empty monomial $1$  of size less than $s_n$, there is no SOS refutation of $\KNAPSACK(2n, n + \varepsilon)$  over $S$, and thus, by Theorem \ref{soundnesscompleteness} there is an $S$-pseudoexpectation for $\KNAPSACK(2n, n + \varepsilon)$. This property will be key in the main result of this chapter.
	
	%
	
	Now we turn to the set of constraints we consider here. The set is slightly modified from \cite{RaghavendraWeitz2017} --  we add the constraint \eqref{constraint-one} below in order to obtain an unsatisfiable set of constraints.
	
	For each $i\in [n]$, introduce $2n$  variables $x_{ij}$, $j\in [2n]$. Denote by $\ks_i$ the polynomial $\sum_{j\in [2n]} x_{ij} - n$. Note that the constraint $\ks_i = \varepsilon$ is just the constraint $\KNAPSACK(2n,n + \varepsilon)$  in the variables $x_{ij},j\in [2n]$. Denote by $Q_n$ the following set of constraints
	
	\begin{enumerate}[label=\Roman*),ref=\Roman*]
		\item $\ks_1 = 1/2$; \label{constraint-one}
		\item $\ks_i^2 = \ks_{i+1}$  for each $i\in [n-1]$; \label{constraint-two} 
		\item $\ks_n^2 = 0$. \label{constraints-three}
	\end{enumerate}
	
	Now, as noted above already, the constraint \eqref{constraint-one} is by itself unsatisfiable over the Boolean values. However, both PCR/$\Q$  and SOS require linear degree to refute \eqref{constraint-one} by itself. The role of \eqref{constraint-two} and \eqref{constraints-three} is two-fold: on one hand they decrease the degree needed to refute the constraints, but the repeated squaring inherent in the constraints also forces the coefficients to blow-up.
	
	%
	%
	
	\section{Upper bounds}
	
	In this section we show that both PCR/$\Q$  and SOS have refutations of $Q_n$ of degree $2$  and of monomial-size polynomial in $n$. Each of these refutations however uses coefficients of exponential bit-complexity in $n$  and thus the refutations themselves have bit-complexities that are exponential in $n$.
	
	\begin{lemma}
		There is a PCR/$\Q$  refutation of $Q_n$ of degree $2$  and of monomial-size $\poly(n)$.
	\end{lemma}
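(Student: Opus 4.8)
The plan is to reproduce, within degree $2$, the semantic argument showing that $Q_n$ is unsatisfiable. Put $c_1 := 1/2$ and $c_{i+1} := c_i^2$, so that $c_i = (1/2)^{2^{i-1}}$. Constraint~\eqref{constraint-one} asserts $\ks_1 = c_1$, and inductively the value $\ks_i^2 = c_i^2$ together with the $i$-th instance of~\eqref{constraint-two} forces $\ks_{i+1} = c_{i+1}$; in particular $\ks_n = c_n$, which contradicts~\eqref{constraints-three} because $c_n^2 \neq 0$. The refutation will use no Boolean axiom at all --- indeed $Q_n$ is already unsatisfiable over $\R$ --- and it is exactly the doubly-exponentially small constant $c_n$ appearing at the end that is responsible for the bit-complexity blow-up established later.

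The conceptual crux, and the only step that needs care, is to square in degree $2$: if the polynomial $\ks_i - c$ has already appeared in the sequence for some rational $c$, then $\ks_i^2 - c^2$ can be derived using only lines of degree at most $2$. Writing $\sigma_i := \sum_{j\in[2n]} x_{ij}$, so that $\ks_i = \sigma_i - n$, the point is the identity
\[
\sigma_i\,\ks_i \;=\; \ks_i^2 + n\,\ks_i ,
\]
verified by expanding both sides. Lift the line $\ks_i - c$ by each variable $x_{ij}$, $j\in[2n]$, obtaining the degree-$2$ lines $x_{ij}\ks_i - c\,x_{ij}$; summing these by repeated use of the linear-combination rule produces $\sigma_i\ks_i - c\,\sigma_i$, which by the identity and $\sigma_i = \ks_i + n$ equals $\ks_i^2 + (n-c)\ks_i - cn$. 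Subtracting $(n-c)$ times the still-available line $\ks_i - c$ leaves exactly $\ks_i^2 - c^2$. Every polynomial used here has degree at most $2$.

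With this gadget the refutation is assembled by chaining it with the axioms. From the axiom $\ks_1 - 1/2$ (constraint~\eqref{constraint-one}) the gadget produces $\ks_1^2 - 1/4$, and subtracting the axiom $\ks_1^2 - \ks_2$ gives $\ks_2 - 1/4$. Inductively, given $\ks_i^2 - c_i^2$, subtract the axiom $\ks_i^2 - \ks_{i+1}$ (the $i$-th instance of~\eqref{constraint-two}) to obtain $\ks_{i+1} - c_{i+1}$, then apply the gadget to obtain $\ks_{i+1}^2 - c_{i+1}^2$. After $n$ rounds, subtracting the axiom $\ks_n^2$ (constraint~\eqref{constraints-three}) from $\ks_n^2 - c_n^2$ leaves the nonzero constant $-c_n^2$; scaling it by $-c_n^{-2}$ (a linear combination with the other coefficient $0$) produces the polynomial $1$, so this is a PCR/$\Q$ refutation.

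Finally, the complexity bounds. Every line in the sequence is a rational linear combination of $1$, the variables $x_{ij}$, the linear forms $\ks_i$, and the quadratics $\ks_i^2$, hence has degree at most $2$; and only $O(n^2)$ distinct monomials ever appear, namely the $x_{ij}x_{ik}$, the $x_{ij}$, and $1$. The sequence has $O(n^2)$ lines: each of the $n$ rounds consists of $2n$ lifts, $O(n)$ linear combinations to form the sum, and $O(1)$ further lines, and there are $O(n)$ axiom lines plus two closing lines. Since each line carries $O(n^2)$ monomials, the monomial-size counted with multiplicity is $O(n^4) = \poly(n)$. I expect no real obstacle beyond the bookkeeping of checking that the summation steps stay in degree $2$ and only ever combine earlier lines, which is immediate from the construction.
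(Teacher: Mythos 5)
Your proof is correct and takes essentially the same approach as the paper: an induction deriving $\ks_i^2 - c_i^2$ (equivalently $\ks_i^2 - 1/2^{2^i}$) at each stage, using the lift-by-each-$x_{ij}$-then-sum trick to square a linear form within degree~$2$, and finishing by combining with $\ks_n^2 = 0$ and scaling by $-c_n^{-2}$. The only cosmetic difference is that you make explicit (via the identity $\sigma_i\ks_i = \ks_i^2 + n\ks_i$ and the subsequent linear-combination bookkeeping) the "$O(n)$ steps" that the paper leaves implicit when it says $\ks_1^2 = \ks_1/2$ can be derived from $\ks_1 = 1/2$.
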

	
	\begin{proof}
		We prove by induction that for any $i\in [n]$  there is a PCR/$\Q$  proof of $\ks_i^2 = 1/2^{2^{i}}$  from $Q_n$  of degree $2$  and monomial-size $\poly(n)$. For $i = 1$, we obtain this as follows. First derive $\ks_1^2 = \ks_1/2$  from $\ks_1 = 1/2$. This can be done in $O(n)$ steps, in degree $2$  and with polynomially many monomials. Secondly derive $\ks_1/2 = 1/4$  from $\ks_1 = 1/2$. This can be done in one step. Finally combine the two derivations to obtain a derivation of $\ks_1^2 = 1/4$.
		
		Suppose then that we have a proof of $\ks_i^2 = 1/2^{2^i}$  from $Q_n$. We derive $\ks_{i+1}^2 = 1/2^{2^{i+1}}$  as follows. First derive $\ks_{i+1} = 1/2^{2^i}$  from $\ks_i^2 = 1/2^{2^i}$  and $\ks_i^2 = \ks_{i+1}$. Secondly derive both $\ks_{i+1}^2 = \ks_{i+1}/2^{2^i}$  and $\ks_{i+1}/2^{2^i} = 1/2^{2^{i+1}}$  from $\ks_{i+1} = 1/2^{2^i}$, and combine these to obtain a proof of $\ks_{i+1}^2 = 1/2^{2^{i+1}}$.
		
		In the end we have a proof of $\ks_n^2 = 1/2^{2^n}$  from $Q_n$  of degree $2$  and of monomial-size $\poly(n)$. By combining this with the axiom $\ks_n^2 = 0$  we reach a contradiction. Note that this last step involves a multiplication by a coefficient of doubly exponential magnitude.
	\end{proof}
	
	\begin{lemma}
		There is an SOS refutation of $Q_n$ of degree $2$  and of monomial-size $\poly(n)$.
	\end{lemma}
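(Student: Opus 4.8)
The idea is to realise, as a single degree-$2$ polynomial identity, the same contradiction that drives the PCR/$\Q$ refutation: constraints \eqref{constraint-one} and \eqref{constraint-two} force $\ks_i$ to equal $a_i := 2^{-2^{i-1}}$ (so $a_1 = 1/2$ and $a_{i+1} = a_i^2$), and then $\ks_n^2 = a_n^2 > 0$ contradicts \eqref{constraints-three}. Since \eqref{constraint-two} and \eqref{constraints-three} are quadratic, in a degree-$2$ refutation their lifts must be scalars, so one cannot iterate the squaring step; instead I would look for constants $\gamma_1,\dots,\gamma_n,t_I$ and linear polynomials $r_j$ with
\[
-1 \;=\; \sum_j r_j^2 \;+\; \sum_{i=1}^{n-1}\gamma_i\bigl(\ks_i^2-\ks_{i+1}\bigr)\;+\;\gamma_n\,\ks_n^2\;+\;t_I\bigl(\ks_1-\tfrac12\bigr),
\]
using no logical axioms at all, which is possible because each $\ks_i$ is a linear form and the $\ks_i^2$ contributions will cancel exactly as honest polynomials.

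The $r_j$ will be rational multiples of $\ks_1-\tfrac12$ and of $\ks_i - a_i$, designed so that $\sum_j r_j^2 = \mu_1(\ks_1-\tfrac12)^2 + \sum_{i=2}^{n}\mu_i(\ks_i - a_i)^2$ with $\mu_i := -\gamma_i > 0$. Since $\ks_1,\dots,\ks_n$ live in pairwise disjoint blocks of variables, I would check the identity block by block: in block $i$ the quadratic term $\mu_i\ks_i^2$ coming from the square is cancelled by $-\mu_i\ks_i^2$ from the lift of \eqref{constraint-two} (or of \eqref{constraints-three} for $i = n$); the term $+\mu_i\ks_{i+1}$ of that lift is pushed into block $i+1$; the lift $t_I(\ks_1-\tfrac12)$ cleans up block $1$; and the surviving linear term in $\ks_i$ is $(\mu_{i-1}-2\mu_i a_i)\ks_i$, which vanishes precisely when $2\mu_i a_i = \mu_{i-1}$, i.e. $\mu_i/\mu_{i-1} = 2^{2^{i-1}-1}$. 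Imposing this, the non-constant part disappears and a single rational constant depending on $\mu_1,\dots,\mu_n,t_I$ remains; a short geometric-series computation then shows that taking $\mu_1 = t_I = 2^{n+1}$, hence $\mu_i = 2^{\,2^{i}-i+n}$ and $\gamma_i = -\mu_i$, makes this constant exactly $-1$, with no leftover additive constant to absorb into the $r_j$.

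It remains to check the quantitative bounds. The refutation has degree $2$: every $r_j^2$ is quadratic, every $\gamma_i(\ks_i^2-\ks_{i+1})$ and $\gamma_n\ks_n^2$ is quadratic, and $t_I(\ks_1-\tfrac12)$ is linear. Its monomial-size is $O(n^3)$: there are $O(n)$ squares, each $r_j$ has $O(n)$ monomials, and each constraint contributes $O(n^2)$ monomials. The only point requiring a small argument is that $\mu_i(\ks_i - a_i)^2$ and $\mu_1(\ks_1-\tfrac12)^2$ are genuine sums of squares of rational polynomials; but $\mu_i = 2^{e_i}$, so $\mu_i = (2^{e_i/2})^2$ if $e_i$ is even and $\mu_i = (2^{(e_i-1)/2})^2+(2^{(e_i-1)/2})^2$ if $e_i$ is odd, giving at most two rational squares per block. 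Since $\mu_n = 2^{2^n}$ is doubly exponential, the refutation has exponential bit-complexity.

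The main obstacle is conceptual rather than computational: the degree-$2$ ceiling forces the lifts of the squared knapsacks to be constants, so the dynamic iteration of the PCR proof is unavailable and one must instead guess the global shape of the certificate above and then solve the linear system in $\mu_1,\dots,\mu_n,t_I$ that makes every non-constant monomial cancel and the constant equal $-1$; once the telescoping relation $2\mu_i a_i = \mu_{i-1}$ is identified, the remaining verification is routine.
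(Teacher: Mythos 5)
Your proposed certificate has exactly the structure of the one the paper writes down verbatim: squares of scaled affine forms in the $\ks_i$, scalar lifts of the quadratic axioms \eqref{constraint-two} and \eqref{constraints-three}, and a scalar lift of \eqref{constraint-one}, with the telescoping condition on the weights forcing cancellation block by block. The only cosmetic difference is that you normalize the target values $a_i$ to powers of $2$ (so that $a_{i+1}=a_i^2$ exactly) whereas the paper uses powers of $n$; your derivation is correct and takes essentially the same approach.
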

	
	\begin{proof}
		The following is an SOS refutation of $Q_n$: 
		\begin{multline*}
		-1 = \sum_{i\in [n]}\frac{(1 - 2n^{2^{i-1}}\ks_i)^2}{n} \\- \sum_{i\in [n-1]}\left(4n^{2^i - 1}\left(\ks_i^2 - \ks_{i+1}\right)\right) \\- 4n^{2^n}\ks_n	+ 4\left(\ks_1 - \frac{1}{2}\right)
		\end{multline*}
		
	\end{proof}

	\section{Lower bound for Sums-of-Squares}
	
	In this section we prove our main claim about bit-complexity of SOS refutations. The proof of the claim is very similar to the one in \cite{RaghavendraWeitz2017} with the use of $S$-pseudoexpectations instead of degree bounded pseudoexpectations being the central novel idea in the proof.
	
	For a monomial $m$  in variables $x_{ij}, i\in [n],j\in [2n]$, for each $i\in [n]$  denote by $m_i$  the monomial in variables $x_{ij}, j\in [2n]$  such that $m = m_1\cdots m_n$. For any $I\subseteq [n]$  let $m_I = \prod_{i\in I}m_i$. We call $m_i$  and $m_I$  the projections of $m$  to index $i$  and set $I$, respectively. For a set $S$  of monomials in variables $x_{ij}, i\in [n],j\in [2n]$, denote by $S_i$  and $S_I$  the sets of projections of all elements of $S$  to index $i$  and set $I$, respectively. 
	
	\begin{theorem}\label{main-theorem-sos}
		There is a constant $c > 0$  such that for large enough $n$, any refutation of $Q_{n}$  has at least $2^{cn}$  distinct explicit monomials or contains a coefficient of magnitude at least $2^{2^n}/2^{cn}$.
	\end{theorem}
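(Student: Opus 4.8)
The plan is to follow the Raghavendra--Weitz strategy, but replacing degree-bounded pseudoexpectations with $S$-pseudoexpectations so that the argument speaks about the number of monomials. Suppose towards a contradiction that $Q_n$ has an SOS refutation with fewer than $2^{cn}$ distinct explicit monomials and with all coefficients of magnitude strictly less than $2^{2^n}/2^{cn}$; let $S$ be the set of significant monomials of this refutation, so $|S| < 2^{cn}$. The key observation is that $Q_n$ contains, in the variables $x_{1j},j\in[2n]$, essentially the constraint $\ks_1 = 1/2$, i.e. $\KNAPSACK(2n, n+1/2)$. As recorded in Section~\ref{section:knapsack}, for $c$ small enough and $n$ large enough we have $|S_1| < s_n$, so there is an $S_1$-pseudoexpectation $E_1$ for $\KNAPSACK(2n,n+1/2)$ in the variables of the first block. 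I would then lift $E_1$ to an $S$-pseudoexpectation $E$ for the first constraint on all the variables by setting $E(m) = E_1(m_1)\cdot[\text{the other blocks contribute }1]$ — concretely, define $E$ on a monomial $m\in S^2$ by $E(m) = E_1(m_1)$ if $m_i$ is "trivial" in a suitable sense for $i\ge 2$, and extend by the natural product/substitution rule coming from the Boolean identities $x_{ij}^2\equiv x_{ij}$. (This is the analogue of Raghavendra--Weitz's observation that a pseudoexpectation for Knapsack on one block extends to the whole instance provided we never need to evaluate on monomials that genuinely mix the blocks.)

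Next I would push this through the refutation to derive a numerical contradiction. Apply the linear functional $E$ to both sides of the SOS identity
\[
-1 = \sum_{i} r_i^2 + \sum_{q\in Q_n} t_q q + \sum_i\bigl(u_i(x_{ij}^2-x_{ij}) + v_i(\cdots)\bigr).
\]
The left side maps to $-1$. On the right side, the square terms and the logical-axiom terms map to non-negative values by the $S$-pseudoexpectation property (using that $E$ is an $S$-pseudoexpectation and that all significant monomials lie in $S$, so $r_i^2$ and the proof as a whole lie in $\R[S^2]$). The term $t_{q}\cdot q$ for $q$ the first constraint $\ks_1 - 1/2$ maps to $0$, since $E$ was built to be a pseudoexpectation for that constraint. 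So the only way $-1$ can come out is from the remaining constraint-lifts, those for \eqref{constraint-two} and \eqref{constraints-three}, namely terms of the form $t_i\cdot(\ks_i^2 - \ks_{i+1})$ and $t_n\cdot\ks_n^2$. The heart of the argument is then to show that, \emph{given the bounds on the number of monomials and on the coefficients}, $E$ of these terms is too small in absolute value to sum to anything near $-1$: because $E$ "believes" $\ks_1 = 1/2$ exactly and the constraints \eqref{constraint-two} force $\ks_{i+1}$ to equal $\ks_i^2$, one shows inductively that $E(\ks_i^2 - \ks_{i+1}) = 0$ on the relevant monomials, or more precisely that any nonzero contribution must be balanced against the enormous coefficient $4n^{2^n}$-type scalars — and a refutation with few monomials cannot generate a scalar of magnitude $2^{2^n}/2^{cn}$ without violating the hypothesis. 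Unwinding the doubly exponential growth $\ks_i^2 \mapsto 1/2^{2^i}$ dictated by the constraints, a refutation that respects the monomial bound must use a coefficient of magnitude $\Omega(2^{2^n}/2^{cn})$, contradicting our assumption; hence any refutation with $<2^{cn}$ monomials has a coefficient that large.

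The main obstacle I anticipate is the correct definition of the lifted $S$-pseudoexpectation $E$ and the verification that it is genuinely an $S$-pseudoexpectation for the first constraint on \emph{all} the variables, not just the first block. One has to be careful about which monomials in $S^2$ actually arise: the projections $m_i$ for $i\ge 2$ need to be handled so that $E$ remains linear, respects $E(1)=1$, and maps everything provably non-negative over $S$ to a non-negative value. The technical content is essentially that the blocks $i\ge 2$ only appear through the "chaining" constraints \eqref{constraint-two}--\eqref{constraints-three}, which $E$ can satisfy trivially by making $E$ vanish on (or behave multiplicatively trivially on) monomials touching those blocks, so that the only constraint doing real work is $\ks_1 = 1/2$ — and against that constraint we already have a good $S_1$-pseudoexpectation by the Knapsack degree lower bound together with Corollary~\ref{degree-criterion} and Theorem~\ref{soundnesscompleteness}. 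Once $E$ is in hand, the rest is the bookkeeping of the doubly-exponential coefficient growth, which is routine given the explicit form of $Q_n$.
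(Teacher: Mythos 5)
There is a genuine gap, and it concerns the heart of the construction. Your plan builds a single $S_1$-pseudoexpectation $E_1$ for the constraint $\ks_1=1/2$ on the first block and then tries to extend it to the other blocks ``trivially,'' leaving the contribution of the chaining lifts $t_i(\ks_i^2-\ks_{i+1})$ and $t_n\ks_n^2$ to be argued small by bookkeeping. This does not work. If $E$ handles blocks $i\geq 2$ trivially (e.g.\ $E_i\equiv 1$ on monomials, or $E_i(1)=1$ and $E_i(m_i)=0$ otherwise), then a short computation shows $E(\ks_i^2-\ks_{i+1})$ is on the order of $n^2$, not small and certainly not zero, so applying $E$ to the refutation identity gives no contradiction: the ``$-1$'' on the left can easily be matched by the chaining terms with ordinary coefficients. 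The chaining constraints \eqref{constraint-two} are exactly what force $\ks_i$ to take the doubly-exponentially small value $1/2^{2^{i-1}}$, and a pseudoexpectation that ignores that structure on blocks $i\ge 2$ simply fails to be a pseudoexpectation for $Q_n$.

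The missing idea, which the paper uses, is a \emph{product} of $n$ distinct block-wise pseudoexpectations with \emph{different} targets: for each $i\in[n]$ one takes an $S_i$-pseudoexpectation $E_i$ for the single constraint $\ks_i = 1/2^{2^{i-1}}$ (these exist because $1/2^{2^{i-1}}\in(0,1)$, so the Knapsack degree lower bound and Corollary~\ref{degree-criterion} apply to each block with its own fractional shift, as long as $|S_i|<s_n$), and then defines $E(m)=E_1(m_1)\cdots E_n(m_n)$. With this choice one gets exactly $E(m\ks_i)=E(m)/2^{2^{i-1}}$, so the chaining lifts vanish identically ($E(m(\ks_i^2-\ks_{i+1}))=E(m)(1/2^{2^i}-1/2^{2^i})=0$), and the only surviving contribution is $E(p\ks_n^2)=E(p)/2^{2^n}$, bounded in magnitude by $|S|\,\Vert p\Vert /2^{2^n}$. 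Applying $E$ then gives $1\leq |S|\,\Vert p\Vert/2^{2^n}$, whence $\Vert p\Vert\geq 2^{2^n}/|S|\geq 2^{2^n}/2^{cn}$. Without the tensor construction with the shifted targets, there is no route to the conclusion, so your ``lift $E_1$ and handle the rest trivially'' plan would need to be replaced entirely. (You also need $E(p^2)\geq 0$ for $p\in\R[S]$, which for the product functional requires a PSD argument block by block; this is not automatic and is proved in the paper by factoring the Gram matrix of each $E_i$.)
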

	
	\begin{proof}
		Let $c$  be such that $s_n \geq 2^{cn}$  for large enough $n$. Let $n$  be large enough, let $\Pi$  be an SOS refutation of $Q_{n}$  with less than $2^{cn}$  distinct explicit monomials and let $S$  be the set of explicit monomials appearing in the refutation $\Pi$. Now $S_i$  has size less than $s_n$  for any $i\in [n]$, and so, by Section \ref{section:knapsack} and by Theorem \ref{soundnesscompleteness}, for any $i\in [n]$  there is an $S_i$-pseudoexpectation $E_i$  for $\{\ks_i = 1/2^{2^{i-1}}\}$.
		
		Now define a linear functional $E\colon\R[S^2]\rightarrow\R$  as follows: for each $m\in S^2$  let
		\[
		E(m) : = E_1(m_1)\cdots E_n(m_n),
		\]
		and extend linearly to the whole of $\R[S^2]$.
		
		We prove that $E$  has the following properties:
		
		\begin{enumerate}[label=\roman*),ref=\roman*]
			\item $E(1) = 1$; \label{propertiesbititemone}
			\item $E(m(x_{ij}^2 - x_{ij})) = 0$  for any $i\in [n]$, $j\in [2n]$  and $m\in S$; \label{propertiesbititemtwo}
			\item $E(m(x_{ij} + \bar{x}_{ij} - 1)) = 0$  for any $i\in [n]$, $j\in [2n]$  and $m\in S$; \label{propertiesbititemthree}
			\item $E(m(\ks_1 - 1/2)) = 0$ for any $m\in S$; \label{propertiesbititemsix}
			\item $E(m(\ks_i^2 - \ks_{i + 1}))) = 0$  for any $i\in [n-1]$  and $m\in S$;\label{propertiesbititemfour}
			\item $|E(p\ks_n^2)|\leq |S|\Vert p\Vert/2^{2^{n}}$  for any polynomial $p\in\R[S]$; \label{propertiesbititemfive}
			\item $E(p^2)\geq 0$  for any $p\in \R[S]$. \label{propertiesbititemseven}
		\end{enumerate}\
		
		Now applying $E$  to the given refutation $\Pi$, we obtain that $-1 \geq E(p\ks_n^2)$, and thus 
		\[1 \leq |E(p\ks_n^2)| \leq |S|\Vert p\Vert/2^{2^{n}}.\] 
		By rearranging the inequality we obtain that 
		\[\Vert p\Vert \geq 2^{2^{n}}/|S|\geq 2^{2^{n}}/2^{cn}.\]
		
		Finally we prove that $E$  has the desired properties. \eqref{propertiesbititemone} follows since $E_i(1) = 1$  for any $i\in [n]$.
		
		For \eqref{propertiesbititemtwo}, we have that 
		\[E(m(x_{ij}^2 - x_{ij})) = E_i(m_i(x_{ij}^2 - x_{ij}))\prod_{i'\neq i}E_{i'}(m_{i'}) = 0,\] since $E_i(m_i(x_{ij}^2 - x_{ij})) = 0$  as $E_i$  is an $S_i$-pseudoexpectation for $\{\ks_i = 1/2^{2^{i - 1}}\}$  and $S_i$ contains both $m_ix_{ij}^2$  and $m_ix_{ij}$  by construction. The item \eqref{propertiesbititemthree} is proved similarly.
		
		For \eqref{propertiesbititemsix}, we have that 
		\[
		E(m(\ks_1 - 1/2)) = E_1(m_1(\ks_1 - 1/2))\prod_{i'\neq 1} E_{i'}(m_{i'}) = 0,
		\]
		since $E_1(m_1(\ks_1 - 1/2)) = 0$  as $E_1$  is an $S_1$-pseudoexpectation for $\{\ks_1 = 1/2\}$.
		
		%
		For \eqref{propertiesbititemfour} and \eqref{propertiesbititemfive}, note that for any $i\in [n]$,
		\begin{align*}
		E(m\ks_i) & = E_i(m_i\ks_i)E(\prod_{i'\neq i}m_{i'})\\
		& = E_i(m_i/2^{2^{i-1}})E(\prod_{i'\neq i}m_{i'})\\
		& = E(m)/2^{2^{i-1}},
		\end{align*}
		where the second equality follows since $E_i(m_i(\ks_i -  1/2^{2^{i-1}})) = 0$  for any $m_i\in S_i$  as $E_i$  is an $S_i$-pseudoexpectation for $\{\ks_i = 1/2^{2^{i-1}}\}$. 
		
		Now for \eqref{propertiesbititemfour} we have that
		\begin{align*}
		E(m(\ks_i^2 - \ks_{i+1})) & = E(m\ks_i^2) - E(m\ks_{i+1})\\
		& = E(m)/(2^{2^{i-1}})^2 - E(m)/2^{2^i} = 0
		\end{align*}
		
		For \eqref{propertiesbititemfive}, write $p = \sum_{m\in S} a_m m$. Then
		\begin{align*}
		|E(p\ks_n^2)| & = |E(p)/(2^{2^{n-1}})^2|\\
		& \leq \sum_{m\in S} |a_m E(m)|/2^{2^n}\\
		& \leq|S|\Vert p\Vert/2^{2^n},
		\end{align*}
		where the last inequality follows from the fact that $0\leq E(m)\leq 1$  for any $m\in S$  because $m \equiv m^2 \mod I_n$  and $1 - m \equiv (1 - m)^2 \mod I_n$.
		
		Finally to see that \eqref{propertiesbititemseven} holds, define for each $i\in [n]$, a linear function $T_i$  with $T_i(m) = E_i(m_i)\prod_{i'\neq i}m_{i'}$. Now clearly $E(m) = T_1(T_2(\ldots T_n(m)\ldots))$. We show that for any $i\in [n]$  and any $p\in\R[S_{[i]}]$, $T_i(p^2)$  is a sum of squares of polynomials in $\R[S_{[i-1]}]$, where $S_{[i]}$  is the projection of $S$  to the initial segment $[i]$. To simplify notation, we prove the case when $i = 2$. The general case is not conceptually any harder. So write $p$  as
		\[
		\sum_{\alpha}\sum_\beta a_{\alpha\beta} x_1^\alpha x_2^\beta,	
		\]
		where $x_1$  and $x_2$  are sequences of the variables in $S_1$  and $S_2$, respectively. Now
		\[
		T_2(p^2) = \sum_{\alpha,\alpha'}\sum_{\beta,\beta'} a_{\alpha\beta}a_{\alpha'\beta'} x_1^\alpha x_1^{\alpha'} E_2(x_2^\beta x_2^{\beta'}).
		\]
		Now the matrix $(E_2(x_2^\beta x_2^{\beta'}))_{\beta,\beta'}$  is positive semidefinite, and so there are some vectors $u$  such that $E_2(x_2^\beta x_2^{\beta'}) = \sum_u u_\beta u_{\beta'}$. Now
		\begin{align*}
		T_2(p^2) & = \sum_{\alpha,\alpha'}\sum_{\beta,\beta'} a_{\alpha\beta}a_{\alpha'\beta'} x_1^\alpha x_1^{\alpha'} \sum_u u_\beta u_{\beta'}\\
		& = \sum_{\alpha,\alpha'}(\sum_{\beta} \sum_u a_{\alpha\beta}u_\beta)(\sum_{\beta'}\sum_u a_{\alpha'\beta'}u_{\beta'}) x_1^\alpha x_1^{\alpha'}\\
		& = (\sum_{\alpha}\sum_\beta\sum_u a_{\alpha\beta}u_\beta x^\alpha_1)^2
		\end{align*}
	\end{proof}
	
	As a corollary to the above theorem, we obtain the following lower bound for the bit-complexity of SOS refutations.
	
	\begin{corollary}\label{sos-corollary}
		Any SOS refutation of $Q_n$  has bit-complexity $2^{\Omega(n)}$.
	\end{corollary}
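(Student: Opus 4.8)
The plan is to derive Corollary~\ref{sos-corollary} directly from Theorem~\ref{main-theorem-sos} by a short counting argument on the bit-string that encodes a refutation. Fix an SOS refutation $\Pi$ of $Q_n$ and let $B$ be its bit-complexity, i.e.\ the length of the shortest bit-string that encodes $\Pi$ with all rational coefficients written as reduced fractions in binary. The key observation is that $B$ simultaneously bounds two quantities: the number of distinct explicit monomials of $\Pi$, and the magnitude of the largest coefficient appearing in $\Pi$.

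First I would argue that the number $N$ of distinct explicit monomials of $\Pi$ satisfies $N \leq B$ (up to an absolute constant factor). Indeed, any encoding of $\Pi$ must in particular list, for each polynomial written down in $\Pi$, the monomials occurring in it together with their coefficients; even under the most generous encoding, writing down $N$ distinct monomials — each of which must be distinguished from the others — requires $\Omega(N)$ bits, since there are at most $2^{O(B)}$ distinct bit-strings of length $O(B)$ and each distinct monomial needs its own identifier within the string. So $N = 2^{O(B)}$ is too weak; more carefully, since the monomials are literally spelled out in the string (as part of the representation of the $r_i$, $t_q$, etc.), we get $N \leq c_1 B$ for an absolute constant $c_1$. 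Second, I would argue that the largest coefficient magnitude $M = \Vert \Pi \Vert$ (the maximum over all coefficients of all polynomials in $\Pi$, in absolute value) satisfies $\log M \leq c_2 B$: a rational $a/b$ in lowest terms with $|a/b| = M \geq 1$ needs at least $\log_2 M$ bits to write $a$ in binary, and this fragment occurs inside the string of length $B$, so $\log_2 M \leq B \leq c_2 B$.

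Now combine these with Theorem~\ref{main-theorem-sos}. Let $c$ be the constant from that theorem. The theorem says that for large enough $n$, either $N \geq 2^{cn}$ or $M \geq 2^{2^n}/2^{cn}$. In the first case, $c_1 B \geq N \geq 2^{cn}$, so $B \geq 2^{cn}/c_1 = 2^{\Omega(n)}$. In the second case, $c_2 B \geq \log_2 M \geq 2^n - cn$, so $B \geq (2^n - cn)/c_2 = 2^{\Omega(n)}$. Either way $B = 2^{\Omega(n)}$, which is the claim.

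The main thing to get right is the first inequality, $N \leq O(B)$: one must be slightly careful that "bit-complexity" as defined in the paper (the minimum length of a bit-string representing the proof) genuinely forces every distinct explicit monomial to contribute at least a constant number of bits — this is true because the proof \eqref{proof} is presented as an explicit polynomial identity, so all explicit monomials are, by definition, "visible in the explicit representation" and hence literally present in any faithful encoding; distinct monomials cannot be conflated. The coefficient-magnitude bound is entirely routine. So the only real content is assembling the dichotomy of Theorem~\ref{main-theorem-sos} into the two cases above, and in both cases reading off $2^{\Omega(n)}$.
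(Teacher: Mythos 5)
Your proposal is correct, and since the paper states the corollary without an explicit proof (it follows immediately from Theorem~\ref{main-theorem-sos}), the counting argument you give is precisely the intended derivation: any encoding of a refutation must spell out each distinct explicit monomial at least once, so $B \geq N$, and any coefficient of magnitude $M\geq 1$ written as a reduced fraction in binary contributes at least $\log_2 M$ bits, so $B \geq \log_2 M$; feeding the dichotomy of Theorem~\ref{main-theorem-sos} into these two bounds yields $B = 2^{\Omega(n)}$ in either case. The only minor stylistic note is that you worry more than necessary about the constant in $N \leq c_1 B$ — the definition of bit-complexity already forces $B \geq N$ directly because each of the $N$ distinct monomials occupies at least one bit of the encoding — but this does not affect the correctness of the argument.
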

	
	%
	
	\section{Lower bound for Polynomial Calculus}
	
	Finally in this section we prove an analogue of Theorem \ref{main-theorem-sos} for Polynomial Calculus Resolution over rationals. Already from the Corollary \ref{sos-corollary} alone we obtain lower bounds on the bit-complexity of PCR/$\Q$  refutations of $Q_n$  using the simulation of \cite{Berkholz}. It is however instructive to prove an analogue of Theorem \ref{main-theorem-sos} also for PCR/$\Q$. 
	
	For SOS we were able to pinpoint exactly where the large coefficient resides in an SOS refutation that uses too few monomials: it must reside in the lift of the constraint $\ks_n^2 = 0$. However for PCR/$\Q$  we will not be able to be this precise. Moreover we need to bring height of the refutation also into the picture. We show that any PCR/$\Q$  refutation that uses only few monomials and coefficients of small magnitude must be very tall.
	
	To prove the theorem for Polynomial Calculus we first prove a form of simulation between SOS and PCR/$\Q$  that gives explicit bounds on coefficients in the SOS simulation in terms of the height of a given PCR/$\Q$ refutation. For the lemma we say that a PCR/$\Q$  proof is $R$-bounded for $R > 0$, if every coefficient in every polynomial of the proof is bounded from above by $R$   in absolute value, and the scalars $a$  and $b$  used in each instance of linear combination rule are also bounded from above by $R$  in absolute value.
	
	In order to state the following lemma we need to define a set $\tilde{S}$  of monomials from a given set $S$. The motivation for this definition here is purely technical. We need $\tilde{S}$  to contain the whole of $S^2$  and enough other monomials so that for every $m\in S$  there are $u_i,v_i\in\R[\tilde{S}]$  so that 
	\[
	m - m^2 = \sum_{i\in [n]}\left(u_i(x_i^2 - x_i) + v_i(x_i + \bar{x}_i - 1)\right).
	\]
	It is clear that there is $\tilde{S}$  satisfying the two requirements that is of size polynomial in the size of $S$ and in the maximum degree of a monomial in $S$. We  can assume without a loss of generality that the maximum degree of a monomial in $S$ is at most linear in $n$.
	
	\begin{lemma}\label{simulation}
		Let $Q$  be a set of polynomial equality constraints, and let $R \geq 2$. Suppose there is an $R$-bounded PCR/$\Q$  refutation of $Q$  of height $h$  that uses only monomials from a set $S$. Then there are polynomials $r_i\in\R[S]$  and scalars $a_q\in\R$  for every $q\in Q$  and polynomials $u_i,v_i\in\R[\tilde{S}]$  for $i\in [n]$  such that
		\[
		-1 = \sum r_i^2 + \sum_{q\in Q}a_q q^2 + \sum_{i\in [n]}\left(u_i(x_i^2 - x_i) + v_i(x_i + \bar{x}_i - 1)\right) 
		\] 
		with $|a_q|\leq R^{4(h + 1)}$  for every $q\in Q$.
	\end{lemma}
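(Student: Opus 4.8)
The plan is to walk along the given $R$-bounded PCR/$\Q$ refutation $p_1,\ldots,p_\ell$ and, by induction on the height of a line, attach to each $p_i$ an identity
\[
-p_i^2 \;=\; \sigma_i \;+\; \sum_{q\in Q} a_q^{(i)}\, q^2 \;+\; \sum_{k\in[n]}\bigl(u_k^{(i)}(x_k^2-x_k)+v_k^{(i)}(x_k+\bar{x}_k-1)\bigr),
\]
in which $\sigma_i$ is a sum of squares of polynomials from $\R[S]$, the $a_q^{(i)}$ are scalars, and $u_k^{(i)},v_k^{(i)}\in\R[\tilde{S}]$, maintaining the invariant $|a_q^{(i)}|\le R^{4t}$ whenever $p_i$ has height $t$. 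Since a refutation ends with $p_\ell=1$, the identity attached to $p_\ell$ is exactly the representation the lemma asks for, with $a_q:=a_q^{(\ell)}$ and $|a_q|\le R^{4h}\le R^{4(h+1)}$; so the task reduces to producing these per-line certificates and controlling the growth of the $a_q^{(i)}$.

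The leaves are immediate. If $p_i=q\in Q$, take $\sigma_i=0$ and $a_q^{(i)}=-1$, all other data zero. If $p_i\in B_n$, then $p_i\in I_n$, so $-p_i^2=-p_i\cdot p_i$ already lies in $I_n$ and is put into the ideal part, with coefficients in $\R[S]\subseteq\R[\tilde{S}]$ and all $a_q^{(i)}=0$. For a linear combination $p_i=ap_j+bp_k$ with $|a|,|b|\le R$, I would use the polynomial identity
\[
-p_i^2 \;=\; (ap_j-bp_k)^2 \;+\; 2a^2\,(-p_j^2)\;+\;2b^2\,(-p_k^2)
\]
and substitute the certificates of $p_j$ and $p_k$: the polynomial $ap_j-bp_k$ uses only monomials occurring in $p_j$ or $p_k$, hence lies in $\R[S]$, and since $2a^2,2b^2\ge 0$ the square part stays a sum of squares over $\R[S]$; the new scalars are $a_q^{(i)}=2a^2a_q^{(j)}+2b^2a_q^{(k)}$, so if $p_i$ has height $t$ then $|a_q^{(i)}|\le 4R^2\cdot R^{4(t-1)}\le R^{4t}$, using $R\ge 2$.

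For a lifting step $p_i=xp_j$ with $x\in\{x_k,\bar{x}_k\}$, I would multiply the certificate of $p_j$ through by $x^2$, using $-p_i^2=x^2(-p_j^2)$. The term $x^2\sigma_j$ is again a sum of squares over $\R[S]$ — each squared polynomial $s$ becomes $xs$, whose monomials stay among those of $xp_j=p_i$ (this is where the auxiliary monomial invariant of the construction is used, see below) — the ideal part times $x^2$ stays in $I_n$, and each term $a_q^{(j)}x^2q^2$ is rewritten modulo $I_n$ according to the sign of $a_q^{(j)}$: when $a_q^{(j)}>0$, the term $a_q^{(j)}x^2q^2=a_q^{(j)}(xq)^2$ is already a positive multiple of a square and contributes nothing to $a_q^{(i)}$; when $a_q^{(j)}<0$, one uses $x^2q^2\equiv q^2-(\bar{x}q)^2\pmod{I_n}$ (via $x^2\equiv x$, $1-x\equiv\bar{x}$ and $\bar{x}\equiv\bar{x}^2$, where $\bar{x}$ is the twin of $x$), so the term becomes $a_q^{(j)}q^2$ plus the positive multiple $(-a_q^{(j)})(\bar{x}q)^2$ of a square. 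Either way $|a_q^{(i)}|\le|a_q^{(j)}|$, so lifting does not grow the coefficient bound, and the invariant $|a_q^{(i)}|\le R^{4t}$ is preserved, completing the induction.

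I expect the real difficulty to be the monomial bookkeeping rather than the coefficient arithmetic, and this is precisely what the auxiliary set $\tilde{S}$ is for. One must guarantee that every polynomial that is ever squared remains in $\R[S]$ — not merely in $\R[\tilde{S}]$ — which is what dictates the particular algebraic identities allowed at each step (it is here that one essentially uses that PCR lifting multiplies by a single variable, and one carries along the invariant that the squared polynomials of $\sigma_i$ use only monomials that appear in the proof), while at the same time every element of $I_n$ accumulated along the way has to be re-expressed as $\sum_k(u_k(x_k^2-x_k)+v_k(x_k+\bar{x}_k-1))$ with $u_k,v_k\in\R[\tilde{S}]$ — which is exactly what $\tilde{S}$ was defined to permit, via $S^2\subseteq\tilde{S}$ and the availability of the identities $m-m^2=\sum_k(u_k(x_k^2-x_k)+v_k(x_k+\bar{x}_k-1))$ with $u_k,v_k\in\R[\tilde{S}]$ for every $m\in S$. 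Carrying both constraints through the lifting and linear-combination rules simultaneously is the part of the argument that needs genuine care. Alternatively, one could phrase the conclusion through the hyperplane separation theorem recalled above, separating the convex cone of certificates of the prescribed shape from $-1$, but the explicit recursion sketched here already delivers the stated bound.
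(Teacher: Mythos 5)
Your construction is a genuinely different route from the paper's. The paper argues non-constructively: assuming the claimed representation does not exist, it separates the convex cone $A$ of SOS-plus-ideal certificates over $S$ and $\tilde{S}$ from the convex set $B=\{-1+\sum_{q\in Q}a_q q^2:\ |a_q|\le R^{4(h+1)}\}$ by a linear functional $E$ with $E(1)=1$, and then shows by induction on height $h'$ that $E(p_i^2)\le 1/R^{4(h-h'+1)}$ for every line $p_i$ of the PCR/$\Q$ refutation, which at the root $p_\ell=1$ contradicts $E(1)=1$. Your per-line certificate is the primal twin of that dual argument, and your coefficient accounting in the linear-combination step (via the identity $-p_i^2=(ap_j-bp_k)^2+2a^2(-p_j^2)+2b^2(-p_k^2)$ and $4R^2\le R^4$) is correct and matches the paper's inductive bound.

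However, the lifting step as you wrote it has a genuine gap that your own final paragraph flags but does not close. Multiplying the certificate of $p_j$ through by $x^2$ produces squares $(xs)^2$ from $\sigma_j$ and $(xq)^2$ or $(\bar{x}q)^2$ from the $a_q$-part, and you need these new squares to have arguments in $\R[S]$. The invariant you propose --- that the squared polynomials use only monomials appearing in the proof, i.e. monomials of $S$ --- is \emph{not} preserved by multiplication by $x$: if $s\in\R[S]$, then $xs$ need not lie in $\R[S]$, because $S$ is not closed under multiplication by a variable. Your parenthetical ``whose monomials stay among those of $xp_j=p_i$'' would require the stronger invariant that the squares in $\sigma_j$ use only monomials of $p_j$ itself, and that stronger invariant already fails at a linear-combination step, since $ap_j-bp_k$ lives in the union of the monomial supports of $p_j$ and $p_k$, not in the support of $p_i=ap_j+bp_k$. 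The same problem affects $xq$ and $\bar{x}q$, and the $x^2$-multiplied ideal part likewise escapes $\R[\tilde{S}]$.

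The repair is to replace the $x^2$-multiplication and sign-split with the identity that the paper's dual induction implicitly uses: $p_j^2-(xp_j)^2=(p_j-xp_j)^2-2p_j^2(x^2-x)$, giving $-(xp_j)^2=-p_j^2+(p_j-xp_j)^2-2p_j^2(x^2-x)$. Now the single new square has argument $p_j-xp_j=p_j-p_i\in\R[S]$, the scalars $a_q^{(i)}=a_q^{(j)}$ are inherited unchanged, and the new ideal term has coefficient $-2p_j^2\in\R[S^2]\subseteq\R[\tilde{S}]$ (with a further routine decomposition of $\bar{x}_k^2-\bar{x}_k$ in terms of $x_k^2-x_k$ and $x_k+\bar{x}_k-1$ when $x$ is a twin variable). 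With this substitution the monomial bookkeeping you rightly identify as the real difficulty does go through, and the recursion delivers $|a_q|\le R^{4h}\le R^{4(h+1)}$ as you claim.
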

	
	\begin{proof}
		Suppose towards a contradiction that the above claim does not hold. Then the following sets are disjoint: 
		\begin{multline*}
		A := \{p\in\R[S^2] : p = \sum r_i^2 + \sum_{i\in [n]}\left(u_i(x_i^2 - x_i) + v_i(x_i + \bar{x}_i - 1)\right),\\\text{where } r_i\in\R[S]\text{ and }u_i,v_i\in\R[\tilde{S}]\text{ for every }i\in [n]\}
		\end{multline*} 
		and 
		\begin{equation*}
		B := \{-1 + \sum_{q\in Q} a_q q^2 : |a_q|\leq R^{4(h + 1)}\}.
		\end{equation*}
		
		Now $A$  is a convex cone and $B$  is a convex set. By the hyperplane separation theorem there is a non-trivial linear functional $E\colon\R[S^2]\rightarrow\R$  such that $E(p)\geq 0$  for every $p\in A$  and $E(p')\leq 0$  for every $p'\in B$. 
		
		We argue first that $E(1) > 0$. Suppose towards a contradiction that $E(1) = 0$. We show that then $E(m) = 0$  for any $m\in S^2$  against the non-triviality of $L$.
		
		Let first $m\in S$. Now, by construction there are $u_i,v_i\in\R[\tilde{S}]$  so that 
		\[m - m^2 = \sum_{i\in [n]}\left(u_i(x_i^2 - x_i) + v_i(x_i + \bar{x}_i - 1)\right),\]
		and thus both $m - m^2$  and $m^2 - m$  are in $A$. Hence $E(m) = E(m^2)$  for any $m\in S$.
		
		Now, since $1 - m = (1 - m)^2 + (m - m^2)$  we have that $E(1)\geq E(m)$, and thus $0\geq E(m)$. On the other hand as $E(m^2)\geq 0$ also $E(m)\geq 0$, and thus $E(m) = 0$ for every $m\in S$.
		
		Let then $m_1,m_2\in S$. Now $m_1^2 \pm 2m_1m_2 + m_2^2 = (m_1 \pm m_2)^2$ and so $|E(2m_1m_2)|\leq E(m_1^2) + E(m_2^2) = 0$. Hence $E(1) > 0$, and we can assume by scaling that $E(1) = 1$.
		
		Let then $p_1,\ldots,p_\ell$  be an $R$-bounded PCR/$\Q$  refutation of $Q$ of height $h$  using only monomials from a set $S$. We prove by induction on the structure of the refutation that for any $p_i$  at height $h'$  we have that $E(p_i^2) \leq 1/R^{4(h - h' + 1)}$. The claim holds clearly for any Boolean axiom.
		
		The claim holds for any $q\in Q$, since any $q\in Q$  is at height $0$  and we have that $-1 + R^{4(h + 1)}q^2\in B$, and thus $E(-1 + R^{4(h+1)}q^2)\leq 0$, i.e. $E(q^2)\leq 1/R^{4(h+1)}$.
		
		Suppose that $p_i$  at height $h' + 1$  is obtained from $p_j$  via a lift with a variable $x$, i.e. $p_i = xp_j$  for some $x$. Now $E((xp_j)^2)\leq E(p_j^2)$, since $p_j^2 - (xp_j)^2 = (p_j - xp_j)^2 - 2p_j^2(x^2 - x)$. Now $p_j$  is at height $h'$, and so by induction assumption, $E(p_j^2) \leq 1/R^{4(h - h' + 1)}$. Hence $E((xp_j)^2)\leq 1/R^{4(h - h')}$.
		
		Suppose then that $p_i$  at height $h' + 1$  is obtained from $p_j$ and $p_k$ via linear combination, i.e. that there are some $a,b\in\Q$  such that $p_i = ap_j + bp_k$. Now both $p_j$ and $p_k$  are at most at height $h'$, and so, by induction assumption, $E(p_j^2),E(p_k^2)\leq 1/R^{4(h - h' + 1)}$. Secondly, by assumption, $|a|,|b|\leq R$, and so $a^2,b^2\leq R^2$. Hence $E(a^2p_j^2),E(b^2p_k^2)\leq R^2/R^{4(h - h' + 1)}$. Thirdly 
		\[a^2p_j^2 - 2abp_jp_k + b^2p_k^2 = (ap_j - bp_k)^2,\] 
		and so $
		E(2abp_jp_k)\leq E(a^2p_j^2) + E(b^2p_k^2)$. Now
		\begin{align*}
		E(p_i^2) & = E(a^2p_j^2) + E( 2abp_jp_k) + E(b^2p_k^2)\\ 
		& \leq 2\left(E(a^2p_j^2) + E(b^2p_k^2)\right)\\
		& \leq 4R^2/R^{4(h - h' + 1)}\\
		& \leq R^4/R^{4(h -h' + 1)}\\
		& = 1/R^{4(h - h' )},
		\end{align*}
		where the fourth line follows, since $R^2\geq 4$.
		
		%
		%
		Now $E(1)\leq 1/R$  against the assumption that $E(1) = 1$.
	\end{proof}

	\begin{theorem}
		There are constants $c > 0$  and $d > 0$  such that for large enough $n$, every $2^{2^{n/2}}$-bounded PCR/$\Q$  refutation of $Q_n$  that uses at most $2^{dn}$  different monomials has height at least
		\[
		2^{n/2 - 2} - \frac{cn + 2\log n}{2^{n/2 + 2}} - 1.
		\]
	\end{theorem}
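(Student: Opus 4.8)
The plan is to feed an $R$-bounded PCR/$\Q$ refutation into the simulation of Lemma~\ref{simulation} and then apply the monomial-versus-coefficient dichotomy of Theorem~\ref{main-theorem-sos} to the resulting SOS refutation. Fix the constant $c>0$ supplied by Theorem~\ref{main-theorem-sos}, so that every SOS refutation of $Q_n$ with fewer than $2^{cn}$ distinct explicit monomials contains a coefficient of magnitude at least $2^{2^n}/2^{cn}$. Since $|\tilde S|\le |S|^{A}\poly(n)$ for some absolute constant $A$ (using that the maximum monomial degree in $S$ may be normalized to be polynomial in $n$), and since the number of monomials of the constraints of $Q_n$ and of the logical axioms is $\poly(n)$, we may fix $d>0$ small enough that $2^{Adn}\poly(n)<2^{cn}$ for all sufficiently large $n$. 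All logarithms below are to base $2$.

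Let $n$ be large and let $\Pi$ be a $2^{2^{n/2}}$-bounded PCR/$\Q$ refutation of $Q_n$ of height $h$ that uses at most $2^{dn}$ distinct monomials; write $S$ for the set of monomials occurring in $\Pi$ and set $R:=2^{2^{n/2}}\ge 2$. Applying Lemma~\ref{simulation} with this $S$ and $R$ yields polynomials $r_i\in\R[S]$, polynomials $u_i,v_i\in\R[\tilde S]$, and scalars $a_q\in\R$ for $q\in Q_n$, with
\[
-1=\sum_i r_i^2+\sum_{q\in Q_n}a_q q^2+\sum_{i}\bigl(u_i(x_i^2-x_i)+v_i(x_i+\bar{x}_i-1)\bigr)
\]
and $|a_q|\le R^{4(h+1)}$ for every $q\in Q_n$. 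Reading $t_q:=a_q q$, this is an SOS refutation of $Q_n$ of the form~\eqref{proof}, and all of its explicit monomials lie in $\tilde S$ together with the $\poly(n)$ monomials of the constraints of $Q_n$ and of the logical axioms; by the choice of $d$ it therefore has fewer than $2^{cn}$ distinct explicit monomials. Hence Theorem~\ref{main-theorem-sos} applies to this SOS refutation, and inspecting its proof the guaranteed large coefficient is the largest coefficient of the lift $t_{\ks_n^2}=a_{\ks_n^2}\,\ks_n^2$ of constraint~\eqref{constraints-three}; that is,
\[
\Vert t_{\ks_n^2}\Vert=|a_{\ks_n^2}|\cdot\Vert\ks_n^2\Vert\ge 2^{2^n}/2^{cn}.
\]

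Since $\Vert\ks_n^2\Vert\le n^2$, this gives $|a_{\ks_n^2}|\ge 2^{2^n}/(n^2 2^{cn})=2^{2^n-cn-2\log n}$. Combining with the simulation bound $|a_{\ks_n^2}|\le R^{4(h+1)}=2^{4(h+1)2^{n/2}}$ and comparing exponents, $4(h+1)2^{n/2}\ge 2^n-cn-2\log n$; dividing by $4\cdot 2^{n/2}$ and rearranging yields
\[
h\ge 2^{n/2-2}-\frac{cn+2\log n}{2^{n/2+2}}-1,
\]
which is the desired bound, with the constant $c$ above and any sufficiently small $d>0$.

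I expect the only delicate point to be the monomial bookkeeping in the second paragraph: the lifts $u_i,v_i$ produced by Lemma~\ref{simulation} range over $\tilde S$ rather than $S$, so one must use that $\tilde S$ is only polynomially larger than $S$ in order to keep the number of distinct explicit monomials of the simulated SOS refutation below the threshold $2^{cn}$ of Theorem~\ref{main-theorem-sos} after choosing $d$. The remaining steps are the mechanical substitution of the two quantitative inputs $|a_q|\le R^{4(h+1)}$ and $\Vert t_{\ks_n^2}\Vert\ge 2^{2^n}/2^{cn}$ together with elementary arithmetic. Note that, unlike the SOS case, the conclusion bounds the \emph{height} of the PCR/$\Q$ refutation rather than any particular coefficient in it: a large coefficient in the simulated SOS proof is traded, through the $R^{4(h+1)}$ dependence in Lemma~\ref{simulation}, for a large height of the original PCR/$\Q$ refutation.
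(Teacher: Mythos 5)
Your proof is correct and takes essentially the same approach as the paper: apply Lemma \ref{simulation} with $R=2^{2^{n/2}}$ to convert the PCR/$\Q$ refutation into an SOS refutation over $\tilde S$, invoke the proof of Theorem \ref{main-theorem-sos} to pin the large coefficient to the lift of $\ks_n^2=0$, and balance the two bounds $|a_{\ks_n^2}|\le R^{4(h+1)}$ and $\Vert a_{\ks_n^2}\ks_n^2\Vert\ge 2^{2^n}/2^{cn}$ to extract the height bound. The monomial bookkeeping (choosing $d$ so that $|\tilde S|$ stays below $2^{cn}$) and the final arithmetic both match the paper's argument.
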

	
	\begin{proof}
		Let $c$  be as in Theorem \ref{main-theorem-sos}, and let $d$  be such that for any set $S$  of monomials of size at most $2^{dn}$, the size of $\tilde{S}$  is less than $2^{cn}$  for large enough $n$. Let $n$  be large enough and let $\Pi$  be a $2^{2^{n/2}}$-bounded PCR/$\Q$  refutation of $Q_n$  of height $h$  that uses at most $2^{dn}$  different monomials. Let $S$  be the set of all monomials in the refutation. Now by Lemma \ref{simulation}, there are polynomials $r_i\in\R[S]$, scalars $a_q\in\R$  for every $q\in Q_n$, and polynomials $u_i,v_i\in\R[\tilde{S}]$  such that
		\begin{equation*}
		-1 = \sum r_i^2 + \sum_{q\in Q_n} a_q q^2 + \sum_{i\in [n]}\left(u_i(x_i^2 - x_i) + v_i(x_i + \bar{x}_i - 1)\right),
		\end{equation*}
		where $|a_q|\leq 2^{2^{n/2}4(h + 1)}$  for every $q\in Q_n$.
		
		The explicit monomials of the above SOS refutation are among $\tilde{S}$  and the size of $\tilde{S}$  is less that $2^{cn}$. Thus, by the proof of Theorem \ref{main-theorem-sos}, the lift of the constraint $\ks_n^2 = 0$  contains a coefficient of magnitude at least $2^{2^n}/2^{cn}$. On the other hand, since $|a_q|\leq 2^{2^{n/2}4(h + 1)}$ the coefficients in lift of the constraint $\ks_n^2 = 0$, i.e. in $a_q\ks_n^2$  are bounded from above by $2^{2^{n/2}4(h + 1)} n^2$  in absolute value.
		
		Putting everything together we obtain that
		\[
		2^{2^{n/2}4(h + 1)} n^2\geq 2^{2^n}/2^{cn}.
		\]
		After solving for $h$  we obtain the wanted lower bound for the height.
	\end{proof}
	
	We obtain a lower bound on bit-complexity for Polynomial Calculus as a corollary to the above theorem.
	
	\begin{corollary}
		Any PCR/$\Q$  refutation of $Q_n$  has bit-complexity $2^{\Omega(n)}$.
	\end{corollary}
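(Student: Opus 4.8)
The plan is to deduce the corollary from the preceding theorem by a short case analysis, exploiting the fact that the bit-complexity of a PCR/$\Q$ refutation simultaneously bounds all three quantities that appear in the hypotheses of that theorem: the number of distinct monomials, the absolute value of the coefficients and scalars, and the height.

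Concretely, I would fix a constant $\delta$ with $0 < \delta \le d$ and $\delta < 1/2$, where $d$ is the constant furnished by the theorem, and then argue by contradiction. Suppose that for some large $n$ there is a PCR/$\Q$ refutation $\Pi$ of $Q_n$ whose bit-complexity $B$ satisfies $B < 2^{\delta n}$. From the definition of bit-complexity I would extract three facts. First, every monomial occurring anywhere in $\Pi$ costs at least one bit to write down, so $\Pi$ uses at most $B+1 \le 2^{dn}$ distinct monomials once $n$ is large. Second, every coefficient of every polynomial of $\Pi$, and every scalar used in the linear-combination rule, is a reduced fraction whose numerator and denominator are written in binary using at most $B$ bits altogether, so each such number has absolute value at most $2^{B} < 2^{2^{\delta n}} \le 2^{2^{n/2}}$; hence $\Pi$ is $2^{2^{n/2}}$-bounded. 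Third, the number $\ell$ of lines of $\Pi$ satisfies $\ell \le B$, since each line costs at least one bit, and the height $h$ of $\Pi$ is at most $\ell$, so $h < 2^{\delta n}$.

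Thus $\Pi$ satisfies the hypotheses of the preceding theorem, which forces $h \ge 2^{n/2-2} - (cn + 2\log n)/2^{n/2+2} - 1$. For large $n$ the subtracted terms are negligible and the right-hand side is at least, say, $2^{n/2-3}$, so $2^{n/2-3} \le h < 2^{\delta n}$; since $\delta < 1/2$, this is impossible once $n$ is large enough. This contradiction shows that for all sufficiently large $n$ every PCR/$\Q$ refutation of $Q_n$ has bit-complexity at least $2^{\delta n}$, i.e.\ bit-complexity $2^{\Omega(n)}$.

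There is no genuine obstacle here: each of the three extracted facts is immediate from the definition of bit-complexity. The only thing requiring mild care is the bookkeeping of constants, namely choosing a single $\delta$ that makes all three reductions go through simultaneously and verifying that the lower-order terms in the theorem's height bound are indeed dominated by $2^{n/2-2}$ for large $n$. One could also sidestep the theorem altogether and obtain the bound directly from Corollary~\ref{sos-corollary} via the simulation of \cite{Berkholz}, as noted at the start of this section, but the argument above is self-contained given the theorem just proved.
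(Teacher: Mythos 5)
Your proof is correct and spells out exactly the routine deduction the paper leaves implicit: the paper states this corollary without proof, relying on the reader to observe, as you do, that bit-complexity simultaneously upper-bounds the number of distinct monomials, the magnitude of every coefficient and scalar, and the height, so a refutation of subexponential bit-complexity would satisfy all the hypotheses of the preceding height-lower-bound theorem and thereby contradict it. The constant bookkeeping (choosing $\delta \le d$ with $\delta < 1/2$ and absorbing the lower-order terms) is handled correctly, and your closing remark that one could instead combine Corollary~\ref{sos-corollary} with the simulation of \cite{Berkholz} matches the paper's own comment at the start of the section.
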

	
	
	\section{Conclusions and open questions}
	
	We have shown here that there is an example of a set of constraints that has both SOS and PCR/$\Q$  refutations with polynomially many monomials, but for which any refutation must have exponential bit-complexity.
	
	The most important open question related to the ideas in this paper is whether the phenomena observed here can occur when the set of constraints comes from a translation of a CNF, or whether the two measures of bit-complexity and monomial-size are polynomially equivalent, when SOS or PCR/$\Q$  are considered as a refutation system for CNFs. The constraints in this paper do not arise from any CNF. 
	
	\section*{Acknowledgment}
	
	I want to thank Albert Atserias for discussions on preliminary versions of this paper. I would also like to thank the anonymous reviewers for many helpful comments that have improved the presentation above.
	
	
	\bibliographystyle{plain}
	\bibliography{bibliography}
\end{document}